\documentclass{article}
\usepackage{amsmath}
\usepackage{parskip}
\usepackage[hidelinks]{hyperref}
\usepackage{graphicx}
\usepackage{natbib}
\usepackage[bf]{caption}
\usepackage{amsthm}
\newtheorem{proposition}{Proposition}[section]
\newtheorem{lemma}{Lemma}[section]
\hyphenation{pe-ri-o-dic-i-ty}
\usepackage{subcaption}

\captionsetup[subtable]{labelformat=simple, labelsep=space}
\usepackage{comment}
\usepackage{color}
\title{On effects of inhomogeneity on anisotropy in Backus average}
\author{
Filip P. Adamus%
\footnote{
Department of Earth Sciences, Memorial University of Newfoundland, Canada, {\tt adamusfp@gmail.com}}\,,
Ayiaz Kaderali%
\footnote{
Department of Earth Sciences, Memorial University of Newfoundland, Canada, {\tt ayiazkaderali@gmail.com}}\,,
Michael A. Slawinski%
\footnote{
Department of Earth Sciences, Memorial University of Newfoundland, Canada, {\tt mslawins@mac.com}}\,,
Theodore Stanoev%
\footnote{
Department of Earth Sciences, Memorial University of Newfoundland, Canada, {\tt theodore.stanoev@gmail.com}}
}
\date{}

\begin{document}
\maketitle
\begin{abstract}
In general, the Backus average of an inhomogeneous stack of isotropic layers is a transversely isotropic medium.
Herein, we examine a relation between this inhomogeneity and the strength of resulting anisotropy, and show that, in general, they are proportional to one another.
There is an important case, however, in which the Backus average of isotropic layers results in an isotropic---as opposed to a transversely isotropic---medium.
We show that it is a consequence of the same rigidity of layers, regardless of their compressibility.
Thus, in general, the strength of anisotropy of the Backus average increases with the degree of inhomogeneity among layers, except for the case in which all layers exhibit the same rigidity.
\end{abstract}
\section{Introduction}
\subsection{Backus average}
In this paper, we discuss the~\citet{Backus1962} average of isotropic layers as a measure of inhomogeneity of these layers.
Herein, the~\citeauthor{Backus1962} average results in a homogeneous transversely isotropic medium.
Each isotropic layer is defined by the elasticity parameters, $c_{1111} := c^{\ast}_{1111}/\rho = v_P^2$ and $c_{2323} := c^{\ast}_{2323}/\rho = v_S^2$\,, whose values we calculate using the $P$- and $S$-wave speeds, $v_P$ and $v_S$\,, which are obtained from compressional- and shear-sonic logs.
Throughout this paper, we use $c_{1111}$ and $c_{2323}$\,, which are the elasticity parameters scaled by density,~$\rho$\,, as opposed to their non-scaled counterparts, $c^{\ast}_{1111}$ and $c^{\ast}_{2323}$\,.
The corresponding five parameters of the transversely isotropic medium are
\begin{equation}
\label{eq:Tue1}
c^{\overline{\rm
TI}}_{1111}=\overline{\left(\dfrac{c_{1111}-2c_{2323}}{c_{1111}}\right)}^{\,2}
\,\,\,\overline{\left(\dfrac{1}{c_{1111}}\right)}^{\,-1}
+\overline{\left(\dfrac{4(c_{1111}-c_{2323})c_{2323}}{c_{1111}}\right)}\,,
\end{equation}
\begin{equation}
\label{eq:Backus1133}
c^{\overline{\rm TI}}_{1133}=\overline{\left(\dfrac{c_{1111}-2c_{2323}}{c_{1111}}\right)}\,\,
\,\,\overline{\left(\dfrac{1}{c_{1111}}\right)}^{\,-1}
\,,
\end{equation}
\begin{equation}
\label{eq:Berry1}
c^{\overline{\rm TI}}_{1212}=\overline{c_{2323}}
\,,
\end{equation}
\begin{equation}
\label{eq:Berry2}
c^{\overline{\rm TI}}_{2323}=\overline{\left(\dfrac{1}{c_{2323}}\right)}^{\,-1}
\,,
\end{equation}
\begin{equation}
\label{eq:Tue2}
c^{\overline{\rm TI}}_{3333}=\overline{\left(\dfrac{1}{c_{1111}}\right)}^{\,-1}
\,.
\end{equation}
Herein, the bar indicates an average, which is defined by~\citet{Backus1962} as
\begin{equation}
\label{eq:bacintegral}
 \overline{f}(x_3)=\int\limits_{-\infty}^\infty w(\xi-x_3)f(\xi)\,{\rm d}\xi
 \,,
\end{equation}
where the weight,~$w(x_3)$\,, allows us the use of many functions, since the conditions imposed on it are not restrictive.
$w$ is required to be a continuous nonnegative function tending to zero at infinities and to exhibit the following properties:
\begin{equation*}
\label{eq:wInt1}
\int\limits_{-\infty}^\infty w(x_3)\,{\rm d}x_3=1
 \,,
\end{equation*}
\begin{equation*}
\int\limits_{-\infty}^\infty x_3\,w(x_3)\,{\rm d}x_3=0
\qquad{\rm and}\qquad
\int\limits_{-\infty}^\infty x_3^2\,w(x_3)\,{\rm d}x_3=\ell'^{\,2}
\,,
\end{equation*}
where $\ell'$ denotes the width of the stack of parallel layers.
In this paper, for computational purposes, we assume equal thicknesses of the layers, where the thickness weights the average.
Also, we assume that the stack of layers stands for the interval of the average; therefore, we use an arithmetic average.
Readers interested in further details of the~\citeauthor{Backus1962} average might refer to~\citet{BosEtAl2017,BosEtAl2018}.
\subsection{Thomsen parameters}
To examine the strength of anisotropy of a transversely isotropic homogeneous medium, we invoke~\citet{Thomsen1986} parameters,
\begin{equation}\label{eq:gamma}
\gamma
:=
\dfrac{
c^{\overline{\rm TI}}_{1212} - c^{\overline{\rm TI}}_{2323}
}{
2\,c^{\overline{\rm TI}}_{2323}
}
\,,
\end{equation}
\begin{equation}\label{eq:delta}
\delta
:=
\dfrac{
\left(c^{\overline{\rm TI}}_{1133}+c^{\overline{\rm TI}}_{2323}\right)^{\!2} -
\left(c^{\overline{\rm TI}}_{3333}-c^{\overline{\rm TI}}_{2323}\right)^{\!2}
}{
2\,c^{\overline{\rm TI}}_{3333}\left(
c^{\overline{\rm TI}}_{3333} - c^{\overline{\rm TI}}_{2323}
\right)
}
\,,
\end{equation}
\begin{equation}\label{eq:epsilon}
\epsilon
:=
\dfrac{
c^{\overline{\rm TI}}_{1111}-c^{\overline{\rm TI}}_{3333}
}{
2\,c^{\overline{\rm TI}}_{3333}
}
\,.
\end{equation}
A quantitative measure on the strength of anisotropy is given by the absolute values of these parameters. 
In the case of isotropy, they are zero.
\section{Effects of inhomogeneity on anisotropy}
\subsection{Alternating layers: Anisotropic medium}
In the context of the~\citeauthor{Backus1962} average,~\citeauthor{Thomsen1986} parameters can be also used to infer the effects of inhomogeneity between layers.
In general, as the inhomogeneity within a stack of layers increases, so does the anisotropy of the medium.

To exemplify this increase, let us consider a stack of identical isotropic layers.
To introduce inhomogeneity, we multiply the two elasticity parameters of every second layer by $a$\,; we obtain $c_{1111}$\,, $c_{2323}$ and $a\,c_{1111}$\,, $a\,c_{2323}$\,, for the adjacent layers.
Using, for such a model, expressions~(\ref{eq:Tue1})--(\ref{eq:Tue2}), we obtain the parameters of a transversely isotropic medium, $c_{ijk\ell}^{\overline{\rm TI}}$\,, which, in turn, we use in expressions~(\ref{eq:gamma})--(\ref{eq:epsilon}) to obtain 
\begin{equation}\label{eq:BackusGamma}
\gamma = \dfrac{\left(a-1\right)^2}{8\,a}\,,
\end{equation}
\begin{equation*}
\delta = 0\,,
\end{equation*}
\begin{equation}\label{eq:BackusEpsilon}
\epsilon 
= 
\dfrac{
\left(a-1\right)^2\left(c_{1111}-c_{2323}\right)c_{2323}
}{
2\,a\,c_{1111}^2
}
\,.
\end{equation}
In contrast to parameters~(\ref{eq:gamma}) and (\ref{eq:epsilon}), in general, their counterparts~(\ref{eq:BackusGamma}) and (\ref{eq:BackusEpsilon}), for this model, can be only nonnegative.
Also, $\delta=0$ is a consequence of alternating layers whose both parameters are scaled by the same value of~$a$\,; it is not a general property for alternating isotropic layers in the context of the~\citeauthor{Backus1962} average.

If $a=1$\,, which means that all layers are the same, then also $\gamma=\epsilon=0$\,; hence, in such a case, the averaged medium is isotropic, as expected.
If $a\to0$ or $a\to\infty$\,, which is tantamount to increasing inhomogeneity between layers, then $\gamma$ and $\epsilon$ tend to infinity; in such a case, the averaged medium is extremely anisotropic.

\begin{figure}[h]
\centering
\includegraphics[width=8cm]{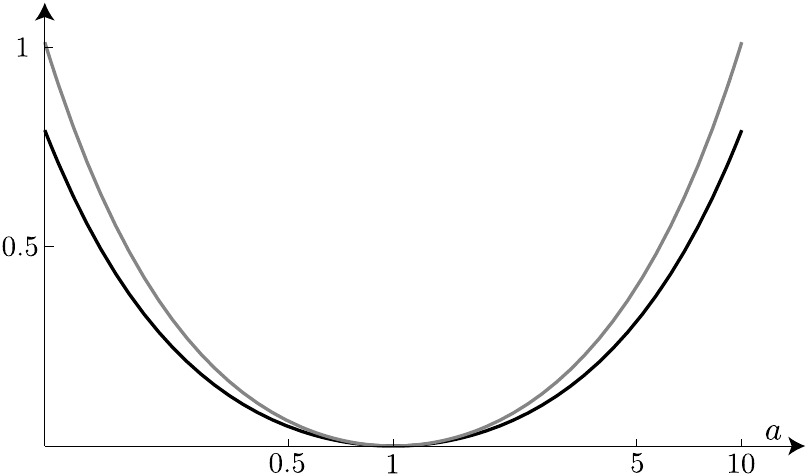}
\caption{\small{Anisotropy of the~\citeauthor{Backus1962} average as a function of layer inhomogeneity:~\citeauthor{Thomsen1986} parameters\,, $\gamma$ and $\epsilon$\,, plotted as grey and black lines, respectively, against logarithmic values of~$a\in\left(10^{-1},10^1\right)$\,.}}
\label{fig:thomsen}
\end{figure}

To illustrate the relationship between inhomogeneity and anisotropy, let us consider a numerical example. 
We use $c_{1111} = 12.15$ and $c_{2323} = 3.24$\,, which are density-scaled elasticity parameters that correspond to sandstone.
Their $SI$ units are $\rm km^2/s^2$\,, and their square roots are $P$-wave and $S$-wave speeds, respectively.
Figure~\ref{fig:thomsen} illustrates a monotonic increase in anisotropy of the averaged medium with an increase of inhomogeneity between layers.
At $a=1$\,, which means that all layers are the same, $\gamma=\epsilon=0$\,.
As~$a$ tends to zero or to infinity, $\gamma$ and $\epsilon$ tend to infinity.
For $a\in(10^{-1},10^0)$\,, the values of the elasticity parameters of the alternating layer are progressively diminished by up to one order of magnitude; for $a\in(10^0,10^1)$\,, they are progressively increased by up to one order. 

For the $SH$ and $qP$ waves, respectively, $\gamma$ and $\epsilon$ are measures of difference between propagation speeds along, and perpendicular to, the layers,
\begin{equation*}
\dfrac{v^{2}_{\parallel} - v^{2}_{\perp}}{2\,v^{2}_{\perp}}
\,.
\end{equation*}
Parameter~$\delta$\,, whose definition does not have such a geometrical interpretation, remains equal to zero.
If, however, the elasticity parameters of the alternate layers are $a\,c_{1111}$ and $\sqrt a\,c_{2323}$\,, $\delta$ asymptotically approaches a finite value, as~$a$~tends to infinity; $\gamma$ and $\epsilon$ still tend to infinity and, as such, they are symptomatic of inhomogeneity among layers.

As illustrated in Figure~\ref{fig:thomsen}, for a stack of isotropic layers, the strength of anisotropy of the resulting transversely isotropic medium is solely a function of inhomogeneity of that stack.
In other words, herein, the strength of anisotropy is a measure of inhomogeneity.

A rather slow increase of values of $\gamma$ and $\epsilon$ as functions of~$a$ supports the adequacy of weakly anisotropic models in many quantitative studies in seismology.
Herein, according to the~\citeauthor{Backus1962} average, even moderately inhomogeneous alternating layers result only in a weakly anisotropic medium.
\subsection{Isotropic layers: Isotropic medium} \label{sec:iso}
Even though, in general, isotropic layers result---by the~\citeauthor{Backus1962} average---in a transversely isotropic medium, there exists a case for which inhomogeneity of the stack of isotropic layers results in an isotropic medium.
In such a case, the inhomogeneity among layers is expressed only by differences in $c_{1111}$\,; $c_{2323}$ remains constant.
\citet[Section~6]{Backus1962} states that
\begin{quote}
if a layered isotropic medium has constant $\mu$\,, the STILWE medium is isotropic.%
\footnote{In this quote, $\mu\equiv c_{2323}$ and STILWE stands for smoothed, transversely isotropic, long-wave equivalent.}
This much was proved by \citet{Postma1955} for periodic two-layered media.
\end{quote}

Let us examine such a case.
Following expressions~(\ref{eq:Tue1})--(\ref{eq:Tue2}), and using a symbolic-calculation software---without any assumption of periodicity \citep[p.~788]{Postma1955}---we obtain,
\begin{align}\label{eq:spec1}
c^{\overline{\rm TI}}_{1111}
&=
\overline{\left(\dfrac{1}{c_{1111}}\right)}^{\,-1}
\,,
\\
c^{\overline{\rm TI}}_{1133}
&=
\overline{\left(\dfrac{1}{c_{1111}}\right)}^{\,-1}-2c_{2323}
\,,
\\
c^{\overline{\rm TI}}_{1212} 
&= 
c_{2323}
\,,
\\
c^{\overline{\rm TI}}_{2323} 
&= 
c_{2323}
\,,
\\
c^{\overline{\rm TI}}_{3333}\label{eq:spec5}
&=
\overline{\left(\dfrac{1}{c_{1111}}\right)}^{\,-1}
\,,
\end{align}
respectively.
Since 
$c^{\overline{\rm TI}}_{1111} = c^{\overline{\rm TI}}_{3333}$\,, 
$c^{\overline{\rm TI}}_{1212} = c^{\overline{\rm TI}}_{2323}$ and 
$c^{\overline{\rm TI}}_{1133} = c^{\overline{\rm TI}}_{1111} - 2\,c^{\overline{\rm TI}}_{2323}$\,,
the medium is isotropic.

In view of the mechanical interpretation of $c_{1111}$ and $c_{2323}$~\citep[e.g.,][Section~5.12.4]{Slawinski2020a}, expressed in terms of the Lam\'e parameters, this result shows that the anisotropy of the~\citeauthor{Backus1962} average is not a consequence of inhomogeneity, in general, but of the difference in the rigidity among the layers.
The difference in compressibility alone does not result in an anisotropic medium.

In terms of wave propagation, the speed of a shear wave, $v^{2}_{S} = c^{\overline{\rm TI}}_{2323} = c_{2323}$\,, depends on rigidity, which is constant, and the speed of a pressure wave, $v^{2}_{P} = c^{\overline{\rm TI}}_{1111}$\,, on the average compressibility.
Since, as shown by~\citet{Rochester2010}, in the context of the necessary and sufficient conditions, the shear wave is due to an equivoluminal deformation, $\nabla\times u$\,, and the pressure wave is due to dilatation, $\nabla\cdot u$\,, where $u$ stands for displacement, it is reasonable to expect anisotropy to originate in a vectorial, not a scalar, quantity.

Let us exemplify such a case with field data, from well-logging measurements offshore Newfoundland. 
We consider a small portion of the data, where each measurement interval is a thin layer within the shale unit.
Sonic logs are used to obtain the $P$- and $S$-wave speeds; the gamma ray log is used to confirm the lithology. 
In lieu of using density logs, whose reliability is questionable, we consider density-scaled elasticity parameters.

The~\citeauthor{Backus1962} average, means and standard deviations of the elasticity parameters, and~\citeauthor{Thomsen1986} parameters are shown in Table~\ref{tab:table1ass}.
Examining the standard deviation in Table~\ref{tab:table1ass}(b), we infer---from their relatively small values---that the properties of layers vary little. 
This small difference can be attributed to the thickness of the overburden increasing with depth as well as to a slightly different composition.
Comparing the standard deviations of $c_{1111}$ and $c_{2323}$\,, we confirm that the former varies more than the latter; the ten averaged layers appear to have nearly constant rigidity. 
In view of~\citeauthor{Thomsen1986} parameters, in Table~\ref{tab:table1ass}(c), the average medium is nearly isotropic, as expected.

\begin{table}
\begin{minipage}[b]{0.3\linewidth}
\centering
\begin{tabular}{||l||l||}
\hline
\multicolumn{2}{||c||}{$[{\rm m}^2/{\rm s}^2]\times 10^6$}\\
\hline\hline  &\\[-10pt]
$c_{1111}^{\overline{\rm TI}}$ & 7.1903\\ \hline &\\[-10pt]
$c_{1133}^{\overline{\rm TI}}$ & 3.8508\\  \hline&\\[-10pt]
$c_{1212}^{\overline{\rm TI}}$ & 1.6698\\ \hline &\\[-10pt]
$c_{2323}^{\overline{\rm TI}}$ & 1.6698\\  \hline &\\[-10pt]
$c_{3333}^{\overline{\rm TI}}$ & 7.1904\\ 
\hline
\end{tabular}
\subcaption{}
\end{minipage}
\begin{minipage}[b]{0.3\linewidth}
\centering
\begin{tabular}{||l||l||}
\hline
\multicolumn{2}{||c||}{$[{\rm m}^2/{\rm s}^2]\times 10^6$}\\
\hline\hline
$\overline{c_{1111}}$ & 7.1915\\ \hline 
$\overline{c_{2323}}$ & 1.6698\\  \hline
${\rm std}_{1111}$ & 0.0923\\ \hline
${\rm std}_{2323}$ & 0.0059\\  \hline
\end{tabular}
\subcaption{}
\end{minipage}
\begin{minipage}[b]{0.3\linewidth}
\centering
\begin{tabular}{||l||l||}
\hline
\multicolumn{2}{||c||}{$\times 10^{-6}$}\\
\hline\hline
$\gamma$ & 5.5905\\ \hline
$\delta$ & -10.1212\\  \hline
$\epsilon$ & -6.1280\\ \hline
\end{tabular}
\subcaption{}
\end{minipage}
\caption{(a) Backus averages (b) means and standard deviations (c) Thomsen parameters,
for ten 0.1524-metre layers.}
\label{tab:table1ass}
\end{table}

\subsection{Transversely isotropic layers: Isotropic medium}\label{sec:translayers}
Even though, in general, transversely isotropic layers result---by the~\citeauthor{Backus1962} average---in a transversely isotropic medium, there exists a case for which inhomogeneity of the stack of transversely isotropic layers results in an isotropic medium.
Let us examine such a case.

\begin{lemma}
\label{lem:TIiso}
A transversely isotropic tensor with $c_{1111}=c_{3333}$\,, $c_{1133}=c_{1111}-2c_{2323}$\,, $c_{1212}\neq c_{2323}$ and $c_{2323}$ being constant is transversely isotropic.	
\end{lemma}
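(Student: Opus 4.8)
\emph{Proof plan.} I read the statement as being about the \citet{Backus1962} average of a stack of transversely isotropic layers sharing the symmetry axis~$x_3$ with the direction of averaging --- the phrase ``$c_{2323}$ being constant'' refers to constancy across that stack --- and the plan is to compute the average explicitly and identify its symmetry class. First I would record the transversely isotropic counterpart of~(\ref{eq:Tue1})--(\ref{eq:Tue2}), that is, the classical \citet{Backus1962} formulas for layered transversely isotropic media, in which $c^{\overline{\rm TI}}_{3333}=\overline{(1/c_{3333})}^{\,-1}$, $c^{\overline{\rm TI}}_{2323}=\overline{(1/c_{2323})}^{\,-1}$, $c^{\overline{\rm TI}}_{1212}=\overline{c_{1212}}$, $c^{\overline{\rm TI}}_{1133}=\overline{(c_{1133}/c_{3333})}\,\,\overline{(1/c_{3333})}^{\,-1}$, and
\begin{equation*}
c^{\overline{\rm TI}}_{1111}=\overline{\left(c_{1111}-\dfrac{c_{1133}^{2}}{c_{3333}}\right)}+\overline{\left(\dfrac{c_{1133}}{c_{3333}}\right)}^{\,2}\,\,\overline{\left(\dfrac{1}{c_{3333}}\right)}^{\,-1}\,;
\end{equation*}
these reduce to~(\ref{eq:Tue1})--(\ref{eq:Tue2}) when each layer is isotropic, and they can be produced --- without any assumption of periodicity --- by the same symbolic computation used in Section~\ref{sec:iso}.

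Next I would substitute the hypotheses: $c_{2323}=\mu$ with the same $\mu$ in every layer, $c_{1111}=c_{3333}$ within each layer (but varying from layer to layer), and $c_{1133}=c_{1111}-2\mu$. Two parameters come for free: $c^{\overline{\rm TI}}_{2323}=\mu$ because $\mu$ is constant, and $c^{\overline{\rm TI}}_{1212}=\overline{c_{1212}}$ by definition. The substantive step is to verify that the other three collapse to
\begin{equation*}
c^{\overline{\rm TI}}_{1111}=c^{\overline{\rm TI}}_{3333}=\overline{\left(\dfrac{1}{c_{1111}}\right)}^{\,-1}
\qquad\text{and}\qquad
c^{\overline{\rm TI}}_{1133}=\overline{\left(\dfrac{1}{c_{1111}}\right)}^{\,-1}-2\mu=c^{\overline{\rm TI}}_{1111}-2\,c^{\overline{\rm TI}}_{2323}\,.
\end{equation*}
For $c^{\overline{\rm TI}}_{1133}$ this is immediate from $\overline{\left((c_{1111}-2\mu)/c_{1111}\right)}=1-2\mu\,\overline{(1/c_{1111})}$, using only linearity of the average and constancy of~$\mu$. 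For $c^{\overline{\rm TI}}_{1111}$ one writes $c_{1111}-c_{1133}^{2}/c_{3333}=\bigl(c_{1111}^{2}-(c_{1111}-2\mu)^{2}\bigr)/c_{1111}=4\mu-4\mu^{2}/c_{1111}$, expands the square of $1-2\mu\,\overline{(1/c_{1111})}$, and checks that the $\mu$-linear contributions cancel one another and the $\mu^{2}$-contributions cancel one another, leaving $\overline{(1/c_{1111})}^{\,-1}$.

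The conclusion then reads off. The averaged tensor satisfies $c^{\overline{\rm TI}}_{1111}=c^{\overline{\rm TI}}_{3333}$ and $c^{\overline{\rm TI}}_{1133}=c^{\overline{\rm TI}}_{1111}-2c^{\overline{\rm TI}}_{2323}$ --- two of the three relations that single out an isotropic tensor among transversely isotropic ones --- while the third, $c^{\overline{\rm TI}}_{1212}=c^{\overline{\rm TI}}_{2323}$, is absent, since $c^{\overline{\rm TI}}_{1212}=\overline{c_{1212}}$ whereas $c^{\overline{\rm TI}}_{2323}=\mu$ and, by hypothesis, $c_{1212}\neq\mu$ in the layers; the averaged medium is therefore transversely isotropic and, generically, not isotropic. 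I expect the main obstacle to be the bookkeeping in the $c^{\overline{\rm TI}}_{1111}$ identity --- the one formula quadratic in an average, where the cancellation has to be tracked with care --- with a minor secondary point that $\overline{c_{1212}}=\mu$ cannot be excluded for every admissible weight (for instance when $c_{1212}-\mu$ changes sign along the stack), so the precise content of the lemma is that, unlike the isotropic-layer case of Section~\ref{sec:iso}, isotropy of the average is no longer forced.
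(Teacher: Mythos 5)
There is a genuine mismatch here: you have proved a different statement from the one the lemma asserts. Lemma~\ref{lem:TIiso} is about a \emph{single} elasticity tensor---one layer's tensor---satisfying $c_{1111}=c_{3333}$\,, $c_{1133}=c_{1111}-2c_{2323}$ and $c_{1212}\neq c_{2323}$\,; its point is that these constraints, which mimic two of the three isotropy relations, do not collapse the tensor's symmetry class to isotropy, so that the layers entering Proposition~\ref{prop:TIiso} are genuinely anisotropic. The paper proves this by writing the tensor as a $6\times6$ matrix, computing its eigenvalues, and observing the multiplicity pattern $\lambda_3=\lambda_4=2c_{2323}$\,, $\lambda_5=\lambda_6=2c_{1212}$ with $c_{1212}\neq c_{2323}$\,, which---by the eigenvalue-multiplicity characterization of symmetry classes in \citet{BonaEtAl2007}---pins the tensor to transverse isotropy. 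Nothing in your plan addresses the symmetry class of an individual layer tensor; you have read ``$c_{2323}$ being constant'' as licensing a jump to the averaged medium, whereas the lemma is the preliminary step that makes the proposition about transversely isotropic (rather than secretly isotropic) layers.

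What you have written instead---substituting the hypotheses into the transversely isotropic Backus formulas~(\ref{eq:pola})--(\ref{eq:adam}) and checking that $c^{\overline{\rm TI}}_{1111}=c^{\overline{\rm TI}}_{3333}=\overline{\left(1/c_{1111}\right)}^{\,-1}$, that $c^{\overline{\rm TI}}_{1133}=c^{\overline{\rm TI}}_{1111}-2\,c^{\overline{\rm TI}}_{2323}$\,, and that $c^{\overline{\rm TI}}_{2323}=c_{2323}$---is, almost verbatim, the paper's proof of Proposition~\ref{prop:TIiso}, and your algebra there is correct, including the cancellation of the $\mu$-linear and $\mu^{2}$ contributions in the quadratic term. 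Your closing observation that $\overline{c_{1212}}=c_{2323}$ is not forced is also right, and is precisely why the proposition asserts that the average \emph{can} be, rather than \emph{is}, isotropic. But none of this establishes the lemma. To repair the argument you would need to return to the single-tensor level and show---for instance via the eigenvalues of the $6\times6$ representation, or by exhibiting the invariance group directly---that a tensor with the stated constraints and $c_{1212}\neq c_{2323}$ is transversely isotropic and not isotropic.
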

\begin{proof}
Consider
\begin{equation*}
C=
\left[
\begin{array}{cccccc}
c_{1111} & c_{1111}-2c_{1212} & c_{1111}-2c_{2323} & 0 & 0 & 0\\
c_{1111}-2c_{1212} & c_{1111} & c_{1111}-2c_{2323} & 0 & 0 & 0\\
c_{1111}-2c_{2323} & c_{1111}-2c_{2323} & c_{1111} & 0 & 0 & 0\\
0 & 0 & 0 & 2c_{2323} & 0 & 0\\
0 & 0 & 0 & 0 & 2c_{2323} & 0\\
0 & 0 & 0 & 0 & 0 & 2c_{1212}\end{array}
\right]
\,.
\end{equation*}
Its eigenvalues are
\begin{equation*}
\lambda_1=\tfrac{3}{2}c_{1111}-c_{1212}-\frac{\sqrt{9c_{1111}^2-32c_{1111}c_{2323}-4c_{1111}c_{1212}+32c_{2323}^2+4c_{1212}^2}}{2}\,,
\end{equation*}
\begin{equation*}
\lambda_2=\tfrac{3}{2}c_{1111}-c_{1212}+\frac{\sqrt{9c_{1111}^2-32c_{1111}c_{2323}-4c_{1111}c_{1212}+32c_{2323}^2+4c_{1212}^2}}{2}\,,
\end{equation*}
\begin{equation*}
\lambda_3=\lambda_4=2c_{2323}\,,
\end{equation*}
\begin{equation*}
\lambda_5=\lambda_6=2c_{1212}\,,
\end{equation*}
which---due to the eigenvalue multiplicities---implies that $C$ is a transversely isotropic tensor~\citep{BonaEtAl2007}, as required.
\end{proof}
\begin{proposition}
\label{prop:TIiso}	
The~\citeauthor{Backus1962} average of a stack of transversely isotropic layers with $c_{1111}=c_{3333}$\,, $c_{1133}=c_{1111}-2c_{2323}$\,, $c_{1212}\neq c_{2323}$ and $c_{2323}$ being constant (Lemma~\ref{lem:TIiso}), can result---depending on the values of parameters---in an isotropic medium.
\end{proposition}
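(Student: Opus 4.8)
The plan is to apply the Backus average formulas for transversely isotropic layers (the generalization of expressions~(\ref{eq:Tue1})--(\ref{eq:Tue2}) that appears in \citet{Backus1962}) to a stack whose constituent layers satisfy the hypotheses of Lemma~\ref{lem:TIiso}, and then to verify that the three isotropy conditions
\begin{equation*}
c^{\overline{\rm TI}}_{1111}=c^{\overline{\rm TI}}_{3333}\,,
\qquad
c^{\overline{\rm TI}}_{1212}=c^{\overline{\rm TI}}_{2323}\,,
\qquad
c^{\overline{\rm TI}}_{1133}=c^{\overline{\rm TI}}_{1111}-2\,c^{\overline{\rm TI}}_{2323}
\end{equation*}
all hold. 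First I would record the Backus formulas for a TI layer in terms of its five parameters $c_{1111},c_{3333},c_{1133},c_{2323},c_{1212}$; these are the standard expressions in which $c^{\overline{\rm TI}}_{2323}$ and $c^{\overline{\rm TI}}_{1212}$ are Reuss-type and Voigt-type averages of $c_{2323}$ and $c_{1212}$ respectively, $c^{\overline{\rm TI}}_{3333}=\overline{(1/c_{3333})}^{-1}$, and $c^{\overline{\rm TI}}_{1111}$, $c^{\overline{\rm TI}}_{1133}$ involve the combinations $c_{1133}/c_{3333}$ and $c_{1111}-c_{1133}^2/c_{3333}$. Then I would substitute the layer constraints $c_{3333}=c_{1111}$, $c_{1133}=c_{1111}-2c_{2323}$, with $c_{2323}$ constant across the stack (but $c_{1111}$ and $c_{1212}$ varying), simplify each averaged parameter, and check the three conditions one at a time.

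The second condition is immediate: since $c_{2323}$ is the same in every layer, $c^{\overline{\rm TI}}_{2323}=\overline{(1/c_{2323})}^{-1}=c_{2323}$, and we will need $c^{\overline{\rm TI}}_{1212}=\overline{c_{1212}}=c_{2323}$, which is the single constraint on the weighting/values of $c_{1212}$ that makes the averaged medium isotropic; this is exactly the ``depending on the values of parameters'' clause in the statement, so I would make that dependence explicit. For the first and third conditions, the key observation (already exploited in Section~\ref{sec:iso} for isotropic layers) is that with $c_{1133}=c_{3333}-2c_{2323}$ and $c_{2323}$ constant, the ratio $(c_{3333}-2c_{2323})/c_{3333}$ and the residual $c_{1111}-c_{1133}^2/c_{3333}=4(c_{1111}-c_{2323})c_{2323}/c_{1111}$ collapse in the same way they did in expressions~(\ref{eq:spec1})--(\ref{eq:spec5}), yielding $c^{\overline{\rm TI}}_{1111}=\overline{(1/c_{1111})}^{-1}=c^{\overline{\rm TI}}_{3333}$ and $c^{\overline{\rm TI}}_{1133}=\overline{(1/c_{1111})}^{-1}-2c_{2323}=c^{\overline{\rm TI}}_{1111}-2\,c^{\overline{\rm TI}}_{2323}$. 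Hence all three isotropy conditions reduce to the single requirement $\overline{c_{1212}}=c_{2323}$, and when the weights and the values $c_{1212}$ in the layers satisfy this, the Backus average is isotropic; I would then exhibit a concrete choice of parameters (e.g. two alternating layers with $c_{1212}$ values averaging to $c_{2323}$) to confirm the ``can result'' assertion and note that otherwise the average is merely transversely isotropic, consistent with Lemma~\ref{lem:TIiso}.

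The main obstacle I anticipate is purely bookkeeping rather than conceptual: one must be careful that the Backus formulas for TI layers are applied correctly (the role of $c_{3333}$ versus $c_{1111}$ inside the Reuss averages, and the exact form of the $c^{\overline{\rm TI}}_{1111}$ combination), since a misplaced parameter would break the cancellation that makes $c^{\overline{\rm TI}}_{1111}=c^{\overline{\rm TI}}_{3333}$. It is therefore prudent to carry out the substitution with symbolic-calculation software, exactly as the authors did for the isotropic-layer case, and to state the resulting five averaged parameters explicitly before reading off isotropy; the verification of the three equalities is then a one-line check, and the existence claim follows by choosing the $c_{1212}$-values so that $\overline{c_{1212}}=c_{2323}$ while keeping $c_{1212}\neq c_{2323}$ in the individual layers.
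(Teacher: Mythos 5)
Your proposal follows essentially the same route as the paper's own proof: write down the Backus-average formulas for transversely isotropic layers, impose the three isotropy conditions, and observe that the constraints $c_{1111}=c_{3333}$, $c_{1133}=c_{1111}-2c_{2323}$ and constant $c_{2323}$ make two of them collapse exactly as in the isotropic-layer case, leaving only $\overline{c_{1212}}=c_{2323}$. If anything, you are slightly more careful than the paper in flagging that $\overline{c_{1212}}=c_{2323}$ is the one condition that must be \emph{imposed} on the parameter values (the ``can result'' clause), whereas the paper's chain of equalities presents it as if it were automatic.
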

\begin{proof}
In general, the~\citeauthor{Backus1962} average of transversely isotropic layers is~\citep[e.g.,][Section 4.2.3]{Slawinski2020b}
\begin{equation}
\label{eq:pola}
c^{\overline{\rm TI}}_{1111}=
\overline{\left(c_{1111}-\dfrac{c_{1133}^2}{c_{3333}}\right)}
+\overline{\left(\dfrac{c_{1133}}{c_{3333}}\right)}^{\,2}
\,\,\,\overline{\left(\dfrac{1}{c_{3333}}\right)}^{\,-1}
\,,
\end{equation}
\begin{equation}
\label{eq:takietam}
c^{\overline{\rm TI}}_{1133}=\overline{\left(\dfrac{c_{1133}}{c_{3333}}\right)}
\,\,\,\overline{\left(\dfrac{1}{c_{3333}}\right)}^{\,-1}
\,,
\end{equation}
\begin{equation}\label{eq:madzia}
c^{\overline{\rm TI}}_{1212}=\overline{c_{1212}}
\,,
\end{equation}
\begin{equation}\label{eq:milek}
c^{\overline{\rm TI}}_{2323}=\overline{\left(\dfrac{1}{c_{2323}}\right)}^{\,-1}
\,,
\end{equation}
\begin{equation}
\label{eq:adam}
c^{\overline{\rm TI}}_{3333}=\overline{\left(\dfrac{1}{c_{3333}}\right)}^{\,-1}
\,.
\end{equation}
Isotropy of the average requires
\begin{equation}\label{eq:cond1}
c^{\overline{\rm TI}}_{1212}=c^{\overline{\rm TI}}_{2323}\,,
\end{equation}
\begin{equation}\label{eq:cond2}
c^{\overline{\rm TI}}_{1111}=c^{\overline{\rm TI}}_{3333}\,,
\end{equation}
\begin{equation}\label{eq:cond3}
c^{\overline{\rm TI}}_{1133}=c^{\overline{\rm TI}}_{1111}-2c^{\overline{\rm TI}}_{2323}\,.
\end{equation}
To satisfy condition~(\ref{eq:cond1}), we equate relations~(\ref{eq:madzia}) and (\ref{eq:milek}).
Since $c_{2323}$ is constant,
\begin{equation*}
c^{\overline{\rm TI}}_{2323} 
=
\overline{\left(\dfrac{1}{c_{2323}}\right)}^{\,-1}
=
\overline{c_{2323}}
=
c_{2323}
=
\overline{c_{1212}}
=
c^{\overline{\rm TI}}_{1212}\,.
\end{equation*} 
To satisfy condition~(\ref{eq:cond2}), we equate relations~(\ref{eq:pola}) and (\ref{eq:adam}).
Since $c_{1111}=c_{3333}$\,, $c_{1133}=c_{1111}-2c_{2323}$\,,
\begin{align*}
c^{\overline{\rm TI}}_{1111}
&=
\overline{\left(c_{1111}-\dfrac{c_{1133}^2}{c_{3333}}\right)} + \overline{\left(\dfrac{c_{1133}}{c_{3333}}\right)}^{\,2}
\,\,\,\overline{\left(\dfrac{1}{c_{3333}}\right)}^{\,-1}
\\
&=
\overline{\left(\dfrac{c_{1111}-2c_{2323}}{c_{1111}}\right)}^{\,2}\,\,\,
\overline{\left(\dfrac{1}{c_{1111}}\right)}^{\,-1}
+\overline{\left(\dfrac{4(c_{1111}-c_{2323})c_{2323}}{c_{1111}}\right)}
\\
&=
\overline{\left(\dfrac{1}{c_{1111}}\right)}^{\,-1}=\overline{\left(\dfrac{1}{c_{3333}}\right)}^{\,-1}=c^{\overline{\rm TI}}_{3333}\,,
\end{align*}
as required.
To satisfy condition~(\ref{eq:cond3}), we equate relations~(\ref{eq:pola}), (\ref{eq:takietam}), (\ref{eq:milek}).
Since $c_{1111}=c_{3333}$\,, $c_{1133}=c_{1111}-2c_{2323}$ and $c_{2323}$ is constant,
\begin{align*}
c^{\overline{\rm TI}}_{1133}
&=
\overline{\left(\dfrac{c_{1133}}{c_{3333}}\right)}
\,\,\,\overline{\left(\dfrac{1}{c_{3333}}\right)}^{\,-1}
=
\overline{\left(\dfrac{c_{1111}-2c_{2323}}{c_{1111}}\right)}\,\,\,\,\overline{\left(\dfrac{1}{c_{1111}}\right)}^{\,-1}
\\
&=
\overline{\left(\dfrac{1}{c_{1111}}\right)}^{\,-1}-2c_{2323}
=
c^{\overline{\rm TI}}_{1111}-2c^{\overline{\rm TI}}_{2323}\,,
\end{align*}
as required, which completes the proof.
\end{proof}
\section{Conclusions}
\label{sec:Conclusions}
For a stack of isotropic layers, the strength of anisotropy---resulting from the \citeauthor{Backus1962} average---is solely a measure of inhomogeneity.
However, if $c_{2323}$ is constant, then that inhomogeneity of $c_{1111}$ alone does not result in anisotropy.
In other words, the anisotropy of the~\citeauthor{Backus1962} average is a consequence of the difference in rigidity among layers, not in compressibility.

A physical counterpart of such a mathematical model might be a porous rock of constant rigidity, whose compressibility varies depending on the amount of liquid within its pores.
Following such a physical interpretation, and according to the~\citeauthor{Backus1962} average, the level of saturation alone has no effect on the isotropy of the medium, even though it has an effect on the value of $c^{\overline{\rm TI}}_{1111}$\,, whose value determines the $P$-wave propagation speed.

It is impossible to distinguish---from the~\citeauthor{Backus1962} average---if the stack of isotropic layers is homogeneous in both elasticity parameters or homogeneous in $c_{2323}$ only. 
Let us consider a numerical example.

If $c_{1111}=10$ and $c_{2323}=2$\,, then---regardless of the number of layers---$c^{\overline{\rm TI}}_{1111}=10$\,, $c^{\overline{\rm TI}}_{1133}=6$\,, $c^{\overline{\rm TI}}_{1212}=2$\,, $c^{\overline{\rm TI}}_{2323}=2$\,, $c^{\overline{\rm TI}}_{3333}=10$\,; the average is isotropic.
For a case discussed in Section~\ref{sec:iso}, we let  $c_{1111}$: $20$\,, $10$\,, $20$\,, $5$\,, $20$\,, $20$\,, $5$\,, $5$\,, $20$\,, $20$\,, and we let $c_{2323}=2$\,, for all layers.
The~\citeauthor{Backus1962} average is the same as for $c_{1111}=10$ and $c_{2323}=2$\,.

Furthermore, as illustrated in Appendix~\ref{sec:inverseprob}, the~\citeauthor{Backus1962} average of transversely isotropic layers can again result in the same values of the isotropic elasticity parameters.
Thus, from the~\citeauthor{Backus1962} average that results in an isotropic medium, it is possible to infer neither the material symmetry of layers nor the constancy of $c_{1111}$\,.
\section*{Acknowledgments}
We wish to acknowledge the graphic support of Elena Patarini and computer support of Izabela Kudela.
This research was performed in the context of The Geomechanics Project supported by Husky Energy. 
Also, this research was partially supported by the Natural Sciences and Engineering Research Council of Canada, grant 238416-2013.
\bibliographystyle{apa}
\bibliography{AKSS.bib}
\begin{appendix}
\section{Transversely isotropic layers: special case}\label{sec:inverseprob}
\begin{table}[h]
\begin{tabular}{||p{0.1575\textwidth}||*{4}{p{0.1575\textwidth}||}} 
\hline
 $c_{1111}$ & $c_{1133}$ & $c_{1212}$ & $c_{2323}$ & $c_{3333}$\\
 \hline\hline
 20&16&3&2&20\\
 \hline
 5&1&1&2&5\\
 \hline
 20&16&3&2&20\\
 \hline
  20&16&1&2&20\\
 \hline
 20&16&2.5&2&20\\
 \hline
 20&16&1.5&2&20\\
 \hline
 5&1&1&2&5\\
 \hline
 10&6&1&2&10\\
 \hline
 5&1&3&2&5\\
 \hline
 20&16&3&2&20\\
 \hline
\end{tabular}
\caption{
Elasticity parameters of ten transversely isotropic layers
}
\label{tab:TI_parameters}
\end{table}

For the values in Table~\ref{tab:TI_parameters}, the~\citeauthor{Backus1962} average is
\begin{equation*}
c^{\overline{\rm TI}}_{1111}=10\,,\quad 
c^{\overline{\rm TI}}_{1133}=6\,,\quad 
c^{\overline{\rm TI}}_{1212}=2\,,\quad
c^{\overline{\rm TI}}_{2323}=2\,,\quad
c^{\overline{\rm TI}}_{3333}=10\,.
\end{equation*}	
\end{appendix}
\end{document}